\documentclass{article}
\usepackage{amssymb}
\usepackage{amsmath}
\usepackage{amsthm}
\usepackage{color}
\usepackage{fullpage}
\usepackage{float}
\usepackage{graphicx}
\usepackage{framed}
\usepackage{color,soul}
\usepackage{authblk}
\usepackage{ifthen}
\newcommand{\YSHI}[1]{\ifthenelse{\equal{\draft}{1}}{{\color{red}{#1}}}{#1}}

\usepackage{comment}

\newtheorem{theorem}{Theorem}
\newtheorem{proposition}[theorem]{Proposition}
\newtheorem{definition}[theorem]{Definition}
\newtheorem{lemma}[theorem]{Lemma}
\newtheorem{corollary}[theorem]{Corollary}

\usepackage{graphics}

\newcommand{\commentout}[1]{}

\def\Tr{\textnormal{Tr}}

\def\A{\textnormal{Alice}}
\def\B{\textnormal{Bob}}

\newcommand{\bra}[1]{\langle#1|}
\newcommand{\ket}[1]{|#1\rangle}

\newcommand{\ketbra}[2]{|{#1}\rangle\!\langle{#2}|}

\newcommand{\bbI}{\mathbb{I}}

\newcommand{\calX}{\mathcal{X}}
\newcommand{\calY}{\mathcal{Y}}
\newcommand{\calA}{\mathcal{A}}
\newcommand{\calB}{\mathcal{B}}

\begin{document}

\title{Local Randomness: Examples and Application\footnote{This work was supported by NSF grant 1526928.}}

\author[1]{Honghao Fu}
\author[1,2]{Carl A. Miller}
\affil[1]{\small Department of Computer Science, Institute for Advanced Computer Studies and
Joint Institute for Quantum Information and Computer Science, University of Maryland, College Park, MD, 20740}
\affil[2]{\small National Institute of Standards and Technology,
100 Bureau Dr., Gaithersbug, MD 20899, USA  }

\maketitle

\begin{abstract}
When two players achieve a superclassical score at a nonlocal game, their
outputs must contain intrinsic randomness.  This fact has many useful
implications for quantum cryptography.  Recently it has been observed
(C. Miller, Y. Shi, Quant. Inf. \& Comp. 17, pp. 0595-0610, 2017)  that such scores also imply
the existence of \textit{local randomness} --- that is, randomness known to one player but not
to the other.  This has potential implications for cryptographic tasks between two cooperating
but mistrustful players.  In the current paper we bring this notion toward practical realization,
by offering near-optimal bounds on local randomness for the CHSH game, and also proving the security of
a cryptographic application of local randomness (single-bit certified deletion).
\end{abstract}

\vskip0.2in

Device-independent quantum cryptography \cite{Ekert91, Mayers98} is based on the observation that any Bell inequality
violation guarantees the existence of intrinsic randomness.  In particular, the outputs of such an inequality are known to be unpredictable to an arbitrary adversary.  Work in this field over more than a decade has culminated in recent proofs
of security for quantum key distribution and randomness expansion that are immune to any errors in quantum hardware \cite{Vazirani:dice, Vazirani:QKD:PRL, Miller:2016, Universal-spot, Dupuis:2016, Arnon:2016}. 

It has more recently been observed \cite{Miller2017} that when two spatially separated parties violate a Bell inequality, then the outputs of either player must contain some unpredictability to the other player.  Whereas global randomness (randomness possessed by both parties) is useful in cryptographic tasks in which two players are cooperating, local randomness (randomness possessed by one party and unknown to the other) is potentially useful in cryptographic settings where the parties are interacting but do not trust one another.  This invites an exploration of quantum cryptographic
protocols that are immunized both against imperfections in the quantum hardware and (possibly coordinated) cheating by one of the players. 

Suppose that a nonlocal game $G$ with complete support\footnote{A nonlocal game $G$ has complete
support if the input distribution is nonzero on all elements of  $\mathcal{A} \times \mathcal{B}$.} is played by two players, Alice and Bob, where Alice's input and output alphabets
are $\mathcal{A}$ and $\mathcal{X}$, respectively, and Bob's input and output alphabets are $\mathcal{B}$
and $\mathcal{Y}$, respectively.  A referee chooses an input pair $(a, b)$ according to a fixed distribution
and distributes $a$ to Alice and $b$ to Bob, who return $x$ and $y$ respectively.
The results of \cite{Miller2017} assert that if the expected score of
Alice and Bob's strategy exceeds the best possible classical score by $\epsilon$,
then Bob will not be able to guess Alice's output with probability better than $(1 - \Omega_G (\epsilon^2))$,
even if he were given Alice's input.  In other words, the pair $(a, x)$ is necessarily more random
to Bob than the input letter $a$ alone.  This is an example of \textit{blind} randomness expansion, where the word
``blind'' is used because one player is blind to the randomness generated by the other.
(This can be compared to the notion of ``bound randomness'' in the three-party setting of \cite{Acin:2016}.)

The results of \cite{Miller2017} are highly general but numerically weak.  The goals of the current paper
are (1) to demonstrate techniques that prove numerically strong bounds on local randomness,
and (2) to demonstrate the power of local randomness by proving security for a specific
application (one-shot certified deletion).  Our study is focused on two example games, the CHSH game and the
Magic Square game.  

Section~\ref{sec:pre} reviews some necessary background and then Section~\ref{subsec:npa} outlines the Navascues-Pironio-Acin (NPA) hierarchy \cite{Navascues:2008}, which has been previously used to prove lower bounds on global randomness \cite{pironio10}.
 The key
difference in the case of local randomness is that we must bound the behavior of a party (Bob) who is making two \textit{sequential} measurements on a single system, rather than a single measurements on two separated systems as in the case of global randomness. Fortunately, the NPA hierarchy can be adapted to handle sequential measurements, as observed in \cite{budroni2013bounding, pironio2010convergent}.
Using such an adapted approach, we compute a function $F$ such that any superclassical score of $s$ at the CHSH game
guarantees that Bob cannot recover Alice's output with probability greater than $F ( s )$.  The function $F$ that we obtain is shown to be optimal within a margin of $0.02$.  (See Figure~\ref{sandwichplot}).

A downside of the CHSH game is that, even when a perfectly optimal strategy is used by Alice and Bob,
Bob still has approximately an $85\%$ chance of guessing Alice's output bit.  For some cryptographic
purposes it is more useful for the player to have a bit that approximates a perfect coin flip.  In Section~\ref{sec:rigid}
we study the Magic Square game.  This game is large enough that
is computationally difficult to apply the methods from Section~\ref{sec:bound}, and
so instead we apply the notion of quantum \textit{rigidity}, which asserts that certain nonlocal games have unique winning strategies.  It was recently shown that the Magic Square game \cite{WuBancal:2016_p} is rigid.
We build off of the proof in \cite{WuBancal:2016_p} to show that in any strategy for Magic Square which achieves
an expected score of $1 - \epsilon$, Alice obtains a bit that Bob cannot guess with probability
greater than $1/2 + O ( \sqrt{\epsilon} )$.  (See Corollary~\ref{cor:magicguess}.)

Lastly, in Section~\ref{sec:cert} we provide an initial application of device-independent local randomness by showing
that it enables \textit{single-bit certified deletion}.  In this cryptographic problem,
Bob possesses an encrypted bit \fbox{$m$} which could be read with a key, $k$, possessed only by Alice, and the goal is for Alice and Bob to interact through classical communication
only so that Bob can certifiably delete his copy of \fbox{$m$}.  The resulting deleted
state must be unreadable even if
Bob were to later learn $k$.
We prove that any multi-use device  that performs well at the Magic Square game can be used
for certified deletion.  A formal statement is given in Theorem~\ref{thm:cert}.  Roughly,
the probability that Bob can recover the bit $m$ after deletion is shown to be no more than $\frac{1}{2} + O ( \sqrt{\epsilon})$,
where $\epsilon$ denotes the average probability that the device loses the Magic Square game,
and the probability that Bob can recover $m$ before deletion is $1 - O ( \epsilon )$.

Our result can be compared to other cryptographic tasks for mistrustful parties in the device-independent setting.  Coin-flipping and bit commitment have been proven in the device-independent setting 
\cite{Silman:2011, Aharon:2011, Aharon:2016} with constant (rather than vanishing) bias.  Also, strong
cryptographic primitives have been proven under additional assumptions such as limited quantum storage \cite{Kaniewski:2016, Rib:2016, Ribeiro2016} and relativistic assumptions \cite{Adlam:2015}.  Exploring the upper limits
of device-independence in the mistrustful setting appears to be an interesting open problem.


\section{Preliminaries}
\label{sec:pre}
In this section, we introduce the concepts that formally define nonlocal games and related notations used through out this paper, starting with the definition of a $2$-player correlation.

Our notation follows \cite{Miller2017}.    Let $\mathcal{A,B}$  denote Alice's and Bob's input alphabets, respectively,
and let $\mathcal{X,Y}$ denote Alice's and Bob's output alphabets.
A \textit{$2$-player (input-output) correlation} is a vector
$(P(xy|ab))$ of nonnegative reals, where $(x,y,a,b)$ varies over $\mathcal{X} \times \mathcal{Y}
\times \mathcal{A} \times \mathcal{B}$, such that 
\begin{eqnarray*}
\sum_{xy} P(xy|ab)  = 1
\end{eqnarray*}
for all pairs $(a, b)$, and such that the quantities
\begin{eqnarray*}
\begin{array}{ccc}
P(x|a) := \sum_y P(xy|ab), & \hskip0.1in &
P(y|b) := \sum_x P(xy|ab)
\end{array}
\end{eqnarray*}
are independent of $b$ and $a$, respectively.  (The latter conditions are referred
to as the ``non-signaling'' constraints.)

A $2$-player game is a pair $(q, H)$ where
\begin{eqnarray}
q \colon \mathcal{A} \times \mathcal{B} \to [0, 1 ]
\end{eqnarray}
is a probability distribution and
\begin{eqnarray}
H \colon \mathcal{A} \times \mathcal{B} \times \mathcal{X}
\times \mathcal{Y} \to [0, 1]
\end{eqnarray}
is a function. If $q(a, b)\ne 0$ for all $a\in \mathcal{A}$ and $b\in\mathcal{B}$,
the game is said to have a {\em complete support}. The expected score associated to
such a game for a $2$-player correlation $(P(xy|ab))$
is
\begin{eqnarray}
\sum_{a, b, x, y }  q ( a, b ) H ( a, b, x, y ) P(xy|ab).
\end{eqnarray}

A \textit{$2$-player strategy} is a $5$-tuple
\begin{eqnarray}\label{eqn:strategy}
\Gamma & = & ( D, E, \{ \{ A_{ax} \}_x \}_a , 
\{ \{ B_{by} \}_y \}_b , \Psi )
\end{eqnarray}
such that $D, E$ are finite dimensional Hilbert spaces,
$\{ \{ A_{ax} \}_x \}_a$ is a family of $\mathcal{X}$-valued
positive operator valued measures (POVMs) on $D$ (indexed by $\mathcal{A}$),
$\{ \{ B_{by} \}_y \}_b$ is a family of $\mathcal{Y}$-valued
positive operator valued measures on $E$,
and $\Psi$ is a density operator on $D \otimes E$. In this paper, we assume without loss of generality that $\Psi$ is pure, written as $\Psi = \left| \psi \right> \left< \psi \right|$, and that the operators $A_{ax}$ and $B_{by}$ are all projectors.
We say that the
strategy $\Gamma$ \textit{achieves} the $2$-player correlation
$(P(xy|ab) )$ if $P(xy|ab) =  \Tr [ \Psi ( A_{ax} \otimes B_{by} ) ]$
for all $a, b, x, y$.  A correlation is a \textit{quantum} correlation if it can be achieved by such a $2$-player strategy.


\section{Local randomness from the NPA hierarchy}
\label{sec:bound}

The goal of this section is to derive an upper bound on \B's probability of guessing \A's after playing the CHSH game with her. The method we use is based on the Navascues-Pironio-Acin hierarchy which is introduced in the next subsection.
\subsection{Navascues-Pironio-Acin hierarchy}
\label{subsec:npa}

The Navascues-Pironio-Acin hierarchy, or NPA hierarchy, was introduced to characterize quantum correlations.  We briefly sketch the idea behind the hierarchy
and refer the reader to \cite{Navascues:2008} for the formal treatment.  The NPA hierarchy is an infinite series
of conditions which must be satisfied by any quantum correlation.


In the measurement scenario, we assume Alice and Bob share state $\ket{\psi}$ and will apply some measurements determined by the inputs.  For compatibility with \cite{Navascues:2008}, we use a different notation in this section and assume that each output letter is associated to a unique input letter --- i.e., 
each output letter $x \in \mathcal{X}$ is uniquely associated to a single input $A(x)$.  If Alice is given input $a$,
then her only valid outputs are those for which $a = A ( x )$.

A \textit{behavior} $P$ in this measurement scenario is a set of nonnegative values $P = \{P(x,y)\;:\quad x \in \calX, y \in \calY\}$
such that $\sum_{x \in a, y \in b} P ( x, y ) = 1$ for any $a \in \calA, b \in \calB$.
The definition of a quantum behavior is as follows.  (As we will discuss, it is somewhat different from the definition of quantum correlation.)
\theoremstyle{definition}
\begin{definition}
\label{quantdef}
	A behavior $P$ is a quantum behaviour if there exists a pure state $\ket{\psi}$ in a Hilbert space $\mathcal{H}$, a set of measurement operators $\{ E_x: x \in \mathcal{X}\}$ for Alice, and a set of measurement operators $\{ E_y: y \in \calY\}$ for Bob, such that $\forall x \in \calX$ and $\forall y \in \calY$
	\begin{align}
		P(x,y) = \bra{\psi}E_xE_y\ket{\psi},
	\end{align}
	with the measurement operators $E$ satisfying
	\begin{enumerate}
		\item $E_x^\dagger = E_x$ and $E_y^\dagger = E_y$,
		\item $E_xE_{\bar{x}} = \delta_{x\bar{x}} E_x$ if $A(x) = A(\bar{x})$ and $E_yE_{\bar{y}} = \delta_{y\bar{y}} E_y$ if $B(y) = B(\bar{y})$,
		\item $\sum_{x \in A^{-1}(a)} E_x = \bbI$ and $\sum_{y \in B^{-1}(b)} E_y = \bbI$ for all $a$, and
		\item $[E_x, E_y] = 0$.
	\end{enumerate}
\end{definition}
The first three properties ensure that the operators $E_x$ and $E_y$ are projectors and define proper measurements. The fourth property ensures that the measurements by \A\; and \B\; do not interfere with one another.
This definition is similar to the definition of a quantum correlation, but is based on commutativity
rather than bipartiteness.  Under these definitions, every quantum correlation yields
a quantum behavior (i.e., by setting $E_x = A_{ax} \otimes \mathbb{I}$, $E_y = \mathbb{I}
\otimes B_{by}$) but not necessarily vice versa \cite{Slofstra:2016}.

The idea of the hierarchy is that if we let $\mathcal{O}$ be any finite
set of operators that can be expressed as finite products of elements of the set $\{ E_x \}_x \cup \{ E_y \}_y$ (for example, $E_x$  or $E_x E_y E_{y'}$), then the matrix $\Gamma$ given by
\begin{align}
\Gamma_{ij} = \bra{\psi}O_i^\dagger O_j \ket{\psi}
\end{align}
where $O_i, O_j$ vary over the elements of $\mathcal{O}$, must be positive semidefinite.
Additionally, there are some independent equalities (which depend on the setting) that must be satisfied by the entries of $\Gamma$.  

We define a sequence of such matrices (certificates) as follows.
Since some of the $O_i$'s can be expressed in multiple ways as products of operators from $\{ E_x \}_x \cup \{ E_y \}_y$, we define the \textit{length} of the operator to be the minimum number of projectors needed to generate it. For
any $k \geq 1$, the $k$th \textit{certificate matrix} $\Gamma^{(k)}$ is the matrix associated to the set
$\mathcal{O}$ of all operators of length at most $k$.  The fact that $\Gamma^{(k)}$ must be positive semidefinite
constrains the possible entries in $\Gamma^{(k)}$, and in particular constrains the values
$P ( x, y ) = \left< \psi \mid E_x E_y \mid \psi \right>$ which can occur in a quantum behavior.
Thus we obtain a hierarchy of constraints on the set of all quantum behaviors.

Measuring the amount of local randomness after a nonlocal game is not as simple as constraining quantum behaviors (Definition~\ref{quantdef}) since
in particular, measurements that Bob uses to guess Alice's output may not commute with the measurements he used to play the game.  Fortunately,
the NPA hierarchy can also be adapted to scenarios which involve sequential measurements \cite{budroni2013bounding, pironio2010convergent}.  
In the next subsection, we apply an adaptation of the NPA hierarchy to study local randomness for the CHSH game.

\subsection{Application of the NPA hierarchy}
\label{subsec:npa_app}
The CHSH game is
defined on alphabets $\mathcal{X} = \mathcal{Y} = \mathcal{A} = \mathcal{B} = \{ 0, 1 \}$, and the input probability
\begin{eqnarray}
q ( a, b ) &= & 1/4
\end{eqnarray}
for all $a, b$. The score function is
\begin{eqnarray}
H( a, b, x, y ) & = & x \oplus y \oplus \neg ( a \wedge b ).
\end{eqnarray}
for all $a, b, x, y$.

As usual, we assume that Alice and Bob share some pure state $\ket{\psi}$. First, Alice gets input $a \in \calA$ and outputs $x \in \calX$. Bob gets input $b \in \calB$ and outputs $y \in \calY$. Then, Bob gets \A's input $a$ and outputs $x' \in \calX$. Alice's projective measurement for input $a$ and output $x$ is $A_{ax}$. Similarly, the projective measurement operator for input $b$ and output $y$ is $B_{by}$. To guess Alice's output, Bob's projective measurement is $B'_{abx'}$ after he gets Alice's input $a$ and outputs $x' \in \calX$.

In the semidefinite programming instance, the objective value is Bob's guessing probability, denoted by $P_2$. The constraints include the expression of $P_1$ and the commutation relations. Both $P_1$ and $P_2$ can be expressed by $A_{ax}$, $B_{by}$ and $B'_{abx'}$. The expressions can be found in Appendix \ref{app:exp}.
We use the third-order certificate to maximize $P_2$ for a given $P_1$ and get the following data.

The $P_2$ values are $1$, $0.995645$, $0.977018$, $0.95783$, $0.938371$, $0.918742$, $0.898992$, $0.879149$ and $0.859229$ when $P_1$ is ranging from $0.75$ to $0.85$ (see Figure~\ref{sandwichplot}). These points indicate the proved upper bound on Bob's guessing probability. Next, we derive a lower bound on $P_2$ to show how close the upper bound is to the actual optimal guessing probability.


First note that the optimal strategy for CHSH involves Alice and Bob sharing a Bell state $\ket{\Phi^+} = \frac{1}{\sqrt{2}} \left( \left| 00 \right> + \left| 11 \right> \right)$,
and Alice performing the $X$ or $Z$ measurement when her input is $0$ or $1$, respectively, and Bob performing the $(X+Z)/\sqrt{2}$ or $(X-Z)/\sqrt{2}$ measurement
when his input is $0$ or $1$, respectively.  This strategy achieves a score of $\frac{1}{2} + \frac{\sqrt{2}}{4}$ at CHSH, and moreover Bob can guess Alice's output given her input with probability $\frac{1}{2} + \frac{\sqrt{2}}{4}$, by simply guessing $x \oplus ( a \wedge b )$.
 
Consider the scenario where Alice and Bob share a random coin $R$. With probability $r$ or $1-r$, the coin $R$ has value $0$ or $1$, respectively. 
If $R = 0$, then Alice and Bob always output $0$, and if $R = 1$, then Alice and Bob play the optimal CHSH strategy.  In the former case, Bob can perfectly
guess Alice's output, while in the latter case, he can guess her output with probability $\frac{1}{2} + \frac{\sqrt{2}}{4}$.

Therefore, the expressions of $P_1$ and $P_2$ in terms of $r$ for this strategy are
\begin{align}
	P_1(r) &= \frac{3}{4} r + \frac{2+\sqrt{2}}{4}(1-r)\\
	P_2(r) &= 1\cdot r + \frac{2+\sqrt{2}}{4}(1-r).
\end{align}
Then the expression of $P_2$ in terms of $P_1$ is
\begin{align}
	P_2 = 1 + \frac{3\sqrt{2}}{4} - \sqrt{2} P_1.
\end{align}

To generate the plot in Figure~\ref{sandwichplot}, we plot the lower bound first. Then we mark the proved data points of the upper bound and connect then with dashed lines to indicate the approximate shape of the upper bound. For the upper bound point above $1$, we cut it off by the line $y=1$.
\begin{figure}[H]
	\centering
 	\includegraphics[width=0.5\textwidth]{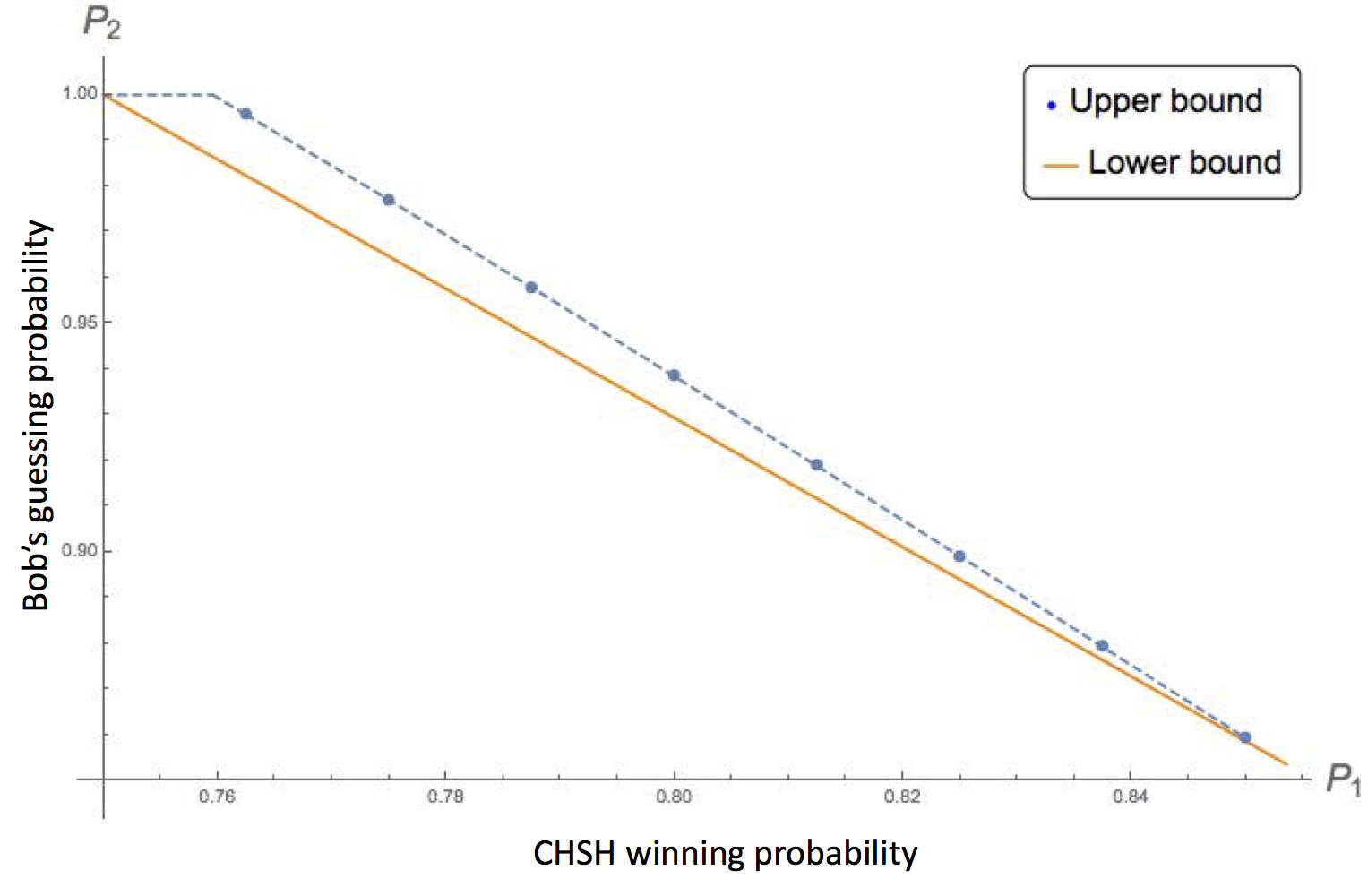}
	\caption{Plot of the lower and approximate upper bounds of $P_2$ against $P_1 \in (0.75, 0.85)$. }
\label{sandwichplot}
\end{figure}
The optimal (blind) rate curve for CHSH must lie in between the orange and blue curves
in Figure~\ref{sandwichplot}.

\section{Local randomness from rigidity}

\label{sec:rigid}

For games with larger alphabets than the CHSH game, using the above adaptation of the NPA hierarchy
is more difficult because of the size of the certificates.  In the current section we explore how techniques from
quantum rigidity can be used to prove blind rate curves.
The approach in the current section requires less
computation than the NPA hierachy approach, and although the rate curve we achieve
lacks the near-optimal properties of our 
rate curve for CHSH (Figure~\ref{sandwichplot}), it is
optimal as the score threshold approaches the optimal quantum score.

We study the Magic Square game, which, like CHSH, is a game with two players, Alice and Bob.  The input alphabets for Alice and Bob are $\mathcal{A} = \mathcal{B} = \{ 0, 1, 2 \}$, the input distribution $q$ is uniform, 
and the output alphabets are the sets of bit strings $\mathcal{X} = \{  000, 011, 101, 110 \}$ for Alice and $\mathcal{Y} = \{ 100, 010, 001, 111 \}$ for Bob.
The game is won if the inputs $a, b$ and outputs $x, y$ satisfy $x_b = y_a$, meaning that the $b$-th bit of $x$ equals the $a$-th bit of $y$.

A strategy for the Magic Square game consists of a pure state $\ket{\psi} \in \mathcal{H}_A \otimes \mathcal{H}_B$, and projective measurement families
$\left\{ \left\{ A_{ax} \right\}_x \right\}_a$ on $\mathcal{H}_A$ and $\left\{ \left\{ B_{by} \right\}_y \right\}_b$ on $\mathcal{H}_B$.  Note that we can let
\begin{eqnarray}
F^z_{ab} & = & \sum_{x_b = z} A_{ax} \\
G^z_{ab} & = & \sum_{y_a = z} B_{by} \\
F_{ab} & = & F^0_{ab} - F^1_{ab} \\
G_{ab} & = & G^0_{ab} - G^1_{ab},
\end{eqnarray}
and then the measurements will satisfy
\begin{eqnarray}
\prod_b F_{ab} & = & I \\
\prod_a G_{ab} & = & -I \\
F_{ab} F_{ab'}  & = & F_{ab'} F_{ab}  \\
G_{ab} G_{a'b} & = & G_{a'b} G_{ab} \\
F_{ab}^2 & = & I \\
G_{ab}^2 & = & I.
\end{eqnarray}
The measurement operators $A_{ax}$ and $B_{by}$ can
be recovered from $\{ F_{ab} \}, \{ G_{ab} \}$, and thus to specify a strategy it suffices to specify $\ket{\psi}, \{ F_{ab} \}, \{ G_{ab} \}$ satisfying
the above conditions.  We refer
to the triple $\left( \ket{\psi}, \{ F_{ab} \}, \{ G_{ab} \} \right)$ as a \textit{reflection strategy} for the Magic Square game.  

Suppose that a reflection strategy $\left( \ket{\psi}, \{ F_{ab} \}, \{ G_{ab} \} \right)$
achieves a score of $1-\delta$.  Appendix~\ref{app:rig} proves the following inequalities
for any $a, a', b, b' \in \{ 0, 1 , 2 \}$ with $a \neq a', b \neq b'$, using steps from the proof of rigidity for the Magic Square game \cite{WuBancal:2016_p}:
\begin{eqnarray}
\label{rigineq1}
\left\| F_{ab} \otimes G_{ab} \left| \psi \right> - \left| \psi \right> \right\| & \leq & 6 \sqrt{\delta} ~~~~~~ \\
\label{rigineq2}
\left\| F_{ab} F_{a' b'} \otimes I \left| \psi \right> + F_{a'b'} F_{a b} \otimes I \left| \psi \right> \right\| & \leq & 6 \sqrt{\delta} ~~~~~~
\end{eqnarray}
The next proposition uses the above inequalities to prove that in a high-performing strategy, if Alice measures with $F_{ab}$ and Bob measures
with $G_{a'b'}$, with $a \neq a', b \neq b'$, then the outcome of Alice's measurement is nearly undetectable to Bob.
\begin{proposition}
\label{prop:dist}
Let $a, a', b, b' \in \{ 0, 1, 2 \}$, $z \in \{ 0, 1 \}$
be such that $a \neq a', b \neq b'$.
Let $\left( \ket{\psi}, \{ F_{ab} \}, \{ G_{ab} \} \right)$ be a reflection strategy for the Magic Square game which
achieves an expected score of $1 - \delta$.  Then, the post-measurement states
\begin{eqnarray}
\label{guessstate1}
\Tr_A \left[ (F_{ab}^0  \otimes G_{a'b'}^z ) \ketbra{\psi}{\psi} (F_{ab}^0  \otimes G_{a'b'}^z ) \right]
\end{eqnarray}
and
\begin{eqnarray}
\label{guessstate2}
\Tr_A \left[ (F_{ab}^1  \otimes G_{a'b'}^z ) \ketbra{\psi}{\psi}(F_{ab}^1  \otimes G_{a'b'}^z ) \right]
\end{eqnarray}
are separated by trace distance at most $18 \sqrt{\delta}$.
\end{proposition}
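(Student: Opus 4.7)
Since the partial trace over Alice is invariant under unitaries on Alice's subsystem, the plan is to exhibit a unitary $U_A$ acting on Alice that sends $\ket{\phi_1}:=(F_{ab}^1\otimes G_{a'b'}^z)\ket{\psi}$ to a vector close to a unit-modulus multiple of $\ket{\phi_0}:=(F_{ab}^0\otimes G_{a'b'}^z)\ket{\psi}$. The natural candidate is $U_A=F_{a'b'}\otimes I$: the rigidity anticommutation \eqref{rigineq2} says that $F_{a'b'}$ approximately anticommutes with $F_{ab}$ on $\ket{\psi}$, so conjugating the projection $F_{ab}^1=(I-F_{ab})/2$ by $F_{a'b'}$ ought to approximately turn it into $F_{ab}^0=(I+F_{ab})/2$; meanwhile rigidity \eqref{rigineq1} lets us absorb the leftover copy of $F_{a'b'}$ into a matching $G_{a'b'}$ on Bob's side, which merely multiplies $G_{a'b'}^z$ by a sign.

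Concretely, I would verify the chain
\begin{align*}
(F_{a'b'}\otimes I)\ket{\phi_1} &= (F_{a'b'}F_{ab}^1\otimes G_{a'b'}^z)\ket{\psi}\\
&\approx (F_{ab}^0 F_{a'b'}\otimes G_{a'b'}^z)\ket{\psi}\\
&= (F_{ab}^0\otimes G_{a'b'}^z)(F_{a'b'}\otimes I)\ket{\psi}\\
&\approx (F_{ab}^0\otimes G_{a'b'}^z)(I\otimes G_{a'b'})\ket{\psi}\\
&= (-1)^z(F_{ab}^0\otimes G_{a'b'}^z)\ket{\psi}=(-1)^z\ket{\phi_0}.
\end{align*}
The first $\approx$ uses the algebraic identity $F_{a'b'}F_{ab}^1-F_{ab}^0F_{a'b'}=-\tfrac12(F_{a'b'}F_{ab}+F_{ab}F_{a'b'})$ combined with \eqref{rigineq2}, giving a vector-norm error of $3\sqrt\delta$ (since $I\otimes G_{a'b'}^z$ has operator norm at most $1$). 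The second $\approx$ uses \eqref{rigineq1}, together with $F_{a'b'}^2=I$ to flip its direction, to swap $F_{a'b'}\otimes I$ for $I\otimes G_{a'b'}$ on $\ket{\psi}$, with error $6\sqrt\delta$. The final equality uses $G_{a'b'}^z G_{a'b'}=(-1)^z G_{a'b'}^z$.

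Combining, $\|(F_{a'b'}\otimes I)\ket{\phi_1}-(-1)^z\ket{\phi_0}\|\le 9\sqrt\delta$. Since both vectors have norm at most $1$, the elementary inequality $\|\ket{\alpha}\bra{\alpha}-\ket{\beta}\bra{\beta}\|_1\le(\|\alpha\|+\|\beta\|)\|\alpha-\beta\|\le 2\|\alpha-\beta\|$ yields $\|(F_{a'b'}\otimes I)\ket{\phi_1}\bra{\phi_1}(F_{a'b'}\otimes I)-\ket{\phi_0}\bra{\phi_0}\|_1\le 18\sqrt\delta$. Taking $\Tr_A$ (which contracts trace distance) and using the Alice-side unitary invariance $\Tr_A[(F_{a'b'}\otimes I)\ket{\phi_1}\bra{\phi_1}(F_{a'b'}\otimes I)]=\Tr_A\ket{\phi_1}\bra{\phi_1}$ gives the claimed bound. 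The delicate step is the error accounting: a naive variant that first transfers the full $G_{a'b'}^z$ to Alice via \eqref{rigineq1} and later transfers it back would invoke \eqref{rigineq1} twice and burn an extra $6\sqrt\delta$, overshooting $18\sqrt\delta$; routing only the bare reflection $F_{a'b'}$ across the tensor---and only once---is what keeps the total at $9\sqrt\delta$ in vector norm.
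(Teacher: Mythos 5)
Your proof is correct and follows essentially the same route as the paper's: apply the reflection $F_{a'b'}$ on Alice's side, use inequality~(\ref{rigineq2}) to flip $F_{ab}^0 \leftrightarrow F_{ab}^1$ (cost $3\sqrt\delta$), use inequality~(\ref{rigineq1}) to trade the leftover $F_{a'b'}$ for $G_{a'b'}$ on Bob's side (cost $6\sqrt\delta$), then pass from a $9\sqrt\delta$ vector bound to an $18\sqrt\delta$ trace-norm bound and take $\Tr_A$, discarding the Alice-side unitary. The only cosmetic difference is that you start from $\ket{\phi_1}$ and steer toward $\ket{\phi_0}$, whereas the paper goes the other way; the algebra and error accounting are the same.
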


\begin{proof}
Applying inequality (\ref{rigineq2}), we have the following, in which we use the notation
$u =_x v$ to denote that the Euclidean distance between the vectors $u$ and $v$ is no more than $x$:
\begin{eqnarray*}
F_{a'b'} F_{ab}^0 \otimes I \left| \psi \right>
& = & 
F_{a'b'} \left( \frac{ I + F_{ab} }{2} \right) \otimes I \left| \psi \right> \\
& ~~ =_{3 \sqrt{\delta}} ~~ & 
\left( \frac{ I - F_{ab} }{2} \right) F_{a'b'}  \otimes I \left| \psi \right> \\
& = & 
F_{ab}^1 F_{a'b'}  \otimes I \left| \psi \right>.
\end{eqnarray*}
Therefore,
\begin{eqnarray*}
F_{a'b'} F_{ab}^0 \otimes G_{a'b'}^z \left| \psi \right> 
& ~~ =_{3 \sqrt{\delta}} ~~ & 
 F_{ab}^1 F_{a'b'} \otimes G_{a'b'}^z \left| \psi \right> \\
& =_{6 \sqrt{\delta}} & 
 F_{ab}^1  \otimes G_{a'b'}^z G_{a'b'} \left| \psi \right>  \\
& = & 
(-1)^z F_{ab}^1  \otimes G_{a'b'}^z \left| \psi \right> 
\end{eqnarray*}
Therefore, since $\left\| u u^* - v v^* \right\|_1  \leq 2 \left\| u - v \right\|$
for any unit vectors $u,v$, we find that the trace distance between the projectors
\begin{eqnarray}
(F_{a'b'} F_{ab}^0  \otimes G_{a'b'}^z ) \ketbra{\psi}{\psi} (F_{ab}^0 F_{a'b'}  \otimes G_{a'b'}^z )
\end{eqnarray}
and 
\begin{eqnarray}
(F_{ab}^1  \otimes G_{a'b'}^z ) \ketbra{\psi}{\psi} (F_{ab}^1  \otimes G_{a'b'}^z )
\end{eqnarray}
is upper bounded by $18 \sqrt{\delta}$.    Applying the partial trace
over $\mathcal{H}_A$ to both projectors (and dropping the $F_{a'b'}$ terms, which become irrelevant),
we obtain the desired result.
\end{proof}

The next corollary follows easily.

\begin{corollary}
\label{cor:magicguess}
Let $\left( \ket{\psi}, \left\{ \left\{  A_{ax} \right\}_x \right\}_a, \left\{ \left\{ B_{by} \right\}_y \right\}_b \right)$ be a strategy for the Magic Square game which achieves
an expected score of $1 - \delta$.  Let $a, b, b' \in \{ 0, 1, 2 \}$ be such that $b \neq b'$, and suppose that the strategy is executed on inputs $a,b$ and outputs $x,y$
are obtained.  Then the probability that Bob can subsequently guess $x_{b'}$ given $b'$ is no more than $\frac{1}{2} +
9  \sqrt{\delta}$.
\end{corollary}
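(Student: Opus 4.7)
My plan is to reduce Bob's guessing probability to a trace-distance computation via Holevo--Helstrom, then bound that distance using the rigidity relations behind Proposition~\ref{prop:dist}. First, I would observe two structural facts. (i) Alice reading off $x_{b'}$ is equivalent to measuring the reflection $F_{ab'}$, since $F^z_{ab'}=\sum_{x:x_{b'}=z}A_{ax}$. (ii) For any $a'\in\{0,1,2\}$, the binary measurement $\{G^0_{a'b},G^1_{a'b}\}$ is a coarse-graining of Bob's measurement $\{B_{by}\}_y$, since $G^z_{a'b}=\sum_{y:y_{a'}=z}B_{by}$; consequently any guessing POVM that Bob applies after his $B_{b\cdot}$ measurement commutes with every $B_{by}$, and in particular with $G_{a'b}$.

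Because $|\mathcal{A}|=3$, I pick $a'\in\{0,1,2\}\setminus\{a\}$; together with the hypothesis $b\ne b'$, the indices $(a,a',b',b)$ meet the distinctness requirements of Proposition~\ref{prop:dist}. If $\{N^0,N^1\}$ is Bob's subsequent guessing POVM on his post-measurement system, a short calculation expanding $F^z_{ab'}=(I+(-1)^z F_{ab'})/2$ gives
\begin{eqnarray*}
P_{\text{guess}}=\tfrac12+\tfrac12\langle\psi|F_{ab'}\otimes(N^0-N^1)|\psi\rangle,
\end{eqnarray*}
so it suffices to show $|\langle\psi|F_{ab'}\otimes N|\psi\rangle|\le 18\sqrt{\delta}$ for every Hermitian $N$ with $\|N\|\le1$ and $[N,B_{by}]=0$ for all $y$ (equivalently, $[N,G_{a'b}]=0$).

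For this final bound I would mimic the core manipulation in the proof of Proposition~\ref{prop:dist}. The approximate anti-commutation~(\ref{rigineq2}) gives $F_{ab'}|\psi\rangle\approx -F_{a'b}F_{ab'}F_{a'b}|\psi\rangle$ on the Alice factor; applying the rigidity relation~(\ref{rigineq1}) twice lets me swap the outer $F_{a'b}$'s across to Bob's side as $G_{a'b}$'s, which slide past $N$ by the commutation just noted and then recombine with their partners via $(F_{a'b}\otimes G_{a'b})|\psi\rangle\approx|\psi\rangle$. Tracking signs through this chain produces an approximate self-negation $\langle\psi|F_{ab'}\otimes N|\psi\rangle\approx -\langle\psi|F_{ab'}\otimes N|\psi\rangle+O(\sqrt{\delta})$, which forces the inner product to be small. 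The main obstacle is careful error bookkeeping: each of the three rigidity replacements contributes an $O(\sqrt{\delta})$ term, and one must verify that the accumulated error lands at $18\sqrt{\delta}$ so that the Holevo--Helstrom factor of $1/2$ yields the stated $P_{\text{guess}}\le\tfrac12+9\sqrt{\delta}$.
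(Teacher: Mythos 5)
Your proposal is correct and gives a genuinely different — and cleaner — derivation than the one the paper intends. The paper's ``follows easily'' implicitly routes through Proposition~\ref{prop:dist}: one bounds, for each value $z$ of the coarse-grained bit $y_{a'}$, the trace distance between Bob's (subnormalized) post-measurement states conditioned on $x_{b'}=0$ vs.\ $x_{b'}=1$, and then applies a Helstrom-type bound; getting $9\sqrt\delta$ rather than the naive $18\sqrt\delta$ requires noticing that the vectors involved are subnormalized (so $\|uu^*-vv^*\|_1\le(\|u\|+\|v\|)\|u-v\|$) and then summing over $z$ with Cauchy--Schwarz. Your route sidesteps Proposition~\ref{prop:dist} entirely: you correctly observe that Bob's effective guessing POVM $\{N^0,N^1\}$ is block-diagonal with respect to $\{B_{by}\}_y$ (hence commutes with every $B_{by}$ and with $G_{a'b}$), derive the exact identity $P_{\text{guess}}=\tfrac12+\tfrac12\langle\psi|F_{ab'}\otimes(N^0-N^1)|\psi\rangle$, and then bound the inner product directly from~(\ref{rigineq1}) and~(\ref{rigineq2}). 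This is tidier and avoids the subnormalized-state bookkeeping. One small slip in your accounting: your chain (one use of~(\ref{rigineq2}) and two uses of~(\ref{rigineq1}), each contributing $6\sqrt\delta$) yields $X\approx_{18\sqrt\delta}-X$, which gives $|X|\le 9\sqrt\delta$, not $|X|\le 18\sqrt\delta$; you seem to have dropped the factor of two coming from the self-negation. This only means your bound is actually tighter than you claim ($P_{\text{guess}}\le\tfrac12+\tfrac{9}{2}\sqrt\delta$), so it still implies the stated $\tfrac12+9\sqrt\delta$. Also, to match the corollary precisely, it is worth making explicit that ``the probability that Bob can subsequently guess $x_{b'}$'' is the probability averaged over the outcomes $x,y$ for the fixed inputs $a,b$, which is exactly what your expectation-value formula computes.
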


\section{The deletion certification protocol}
\label{sec:cert}

We next focus on the problem of certified deletion, which we describe as follows.  Alice wishes to interact with an untrusted device ($D^a$) and a
second party (Bob) so as to prepare for herself a random bit $m$ and a classical string $k$, such that after the interaction
is complete the following conditions hold:
\begin{enumerate}
\item[(A)] If Alice were to give $k$ to Bob immediately, then Bob could recover the bit $m$.

\item[(B)] There is a \textit{deletion procedure} that Alice and Bob can carry out,
involving classical communication only, such that
after the protocol is over Bob will not be able to recover $m$ even if he were given $k$.
\end{enumerate}
Note that this procedure can be used as a form of encryption: if Alice has a predetermined
secret message bit $y \in \{ 0, 1 \}$ which she wishes to encrypt, then she can execute the same preparation procedure and then transmit
the XOR bit $y \oplus m$ to Bob.  Recovering or deleting $y$ is then equivalent to recovering
or deleting $m$.

Variants of this problem have been studied in other settings (e.g., \cite{Unruh:2015} in 
a computational setting, \cite{Rib:2016, Kaniewski:2016} in a bounded storage model). 
Our setting is the \textit{device-independent} setting, where the honest
user Alice does not trust the quantum processes used
in the protocol.  Our protocol is based on the Magic Square game.  We make the following assumptions:
\begin{enumerate}
\item Alice and Bob possess an untrusted $2$-part device $D = (D^a, D^b )$
which is compatible with the Magic Square game.

\item Alice has the ability to generate private (trusted) randomness.

\item Alice's device $D^a$ does not communicate information to Bob or to $D^b$ once
the protocol is underway.

\item Alice and Bob have the ability to communicate classically.
\end{enumerate}
No assumptions are made about Bob's behavior --- in particular, he may perform
arbitrary operations on any quantum information that is contained inside of the device $D^b$
that he possesses.\footnote{We could model Bob's behavior simply by allowing him to possess a quantum system $Q$ and
to perform arbitrary operations on it.  We have chosen to allow him to have a device 
because it is easier to express his behavior in the case where he is honest.}


It is helpful to change notation from the previous section.
The protocol will contain two sequence of inputs to the Magic Square game, one for Alice and one for Bob,  which will be denoted by $\pmb{v}^a = (v_1^a, v_2^a \dots v_N^a)$ and $\pmb{v}^b = (v_1^b, v_2^b \dots v_N^b)$. The sequences of the outputs will be $\pmb{h}^a = (h_1^a, h_2^a \dots h_N^a)$ for Alice and $\pmb{h}^b = (h_1^b, h_2^b \dots h_N^b)$ for Bob. The initial preparation protocol is given Figure~\ref{certprot}.

\begin{figure}[H]
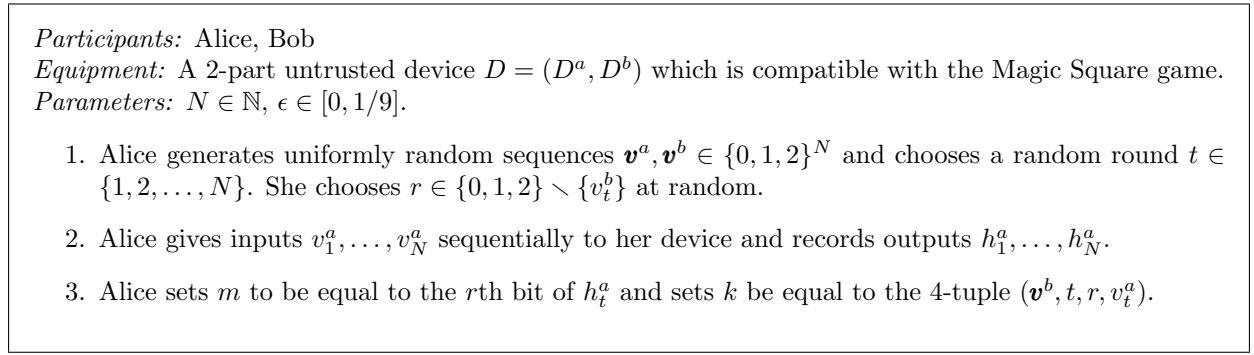

\begin{framed}
\textit{Participants:} Alice, Bob \\
\textit{Equipment:} A $2$-part untrusted device $D = (D^a, D^b)$ which is compatible with the Magic Square game. \\
\textit{Parameters:} $N \in \mathbb{N}$, $\epsilon \in [0, 1/9]$.
\begin{enumerate}
	\item Alice generates uniformly random sequences $\pmb{v}^a, \pmb{v}^b \in \{ 0, 1, 2 \}^N$ and chooses a random round
	$t \in \{ 1, 2, \ldots, N \}$.  She chooses $r \in \{ 0, 1 , 2 \} \smallsetminus \{ v^b_t \}$ at random.
	\item Alice gives inputs $v^a_1, \ldots, v^a_N$ sequentially to her device and records outputs $h^a_1 , \ldots , h^a_N$.
	\item Alice sets $m$ to be equal to the $r$th bit of $h^a_t$ and sets $k$ be equal to the $4$-tuple $(\pmb{v}^b, t, r, v_t^a)$.
\end{enumerate}
\end{framed}
\caption{The preparation protocol ($PREP$)}
\label{certprot}
\end{figure}

We wish to show first that it is possible for Bob to determine $m$ if he were given $k$.  This is straightforward:
if the device $D = (D^a, D^b )$ were such that it wins the Magic Square game with probability
$1 - \epsilon$ at each use, then the protocol
in Figure~\ref{recprot} successfully determines $m$ with probability $1 - \epsilon$.

\begin{figure}[H]
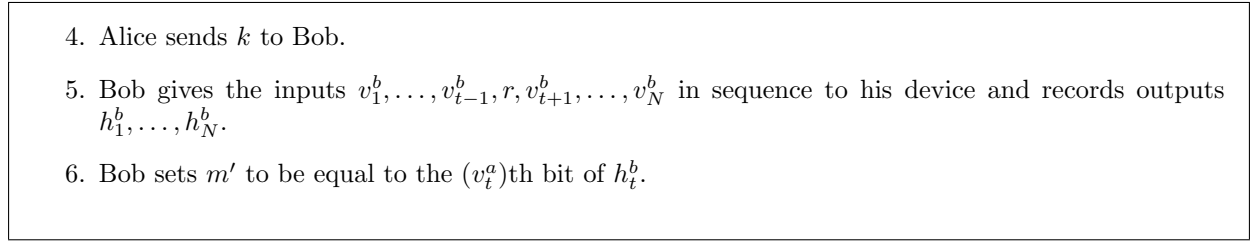

\begin{framed}
\begin{enumerate}
	\item[4.] Alice sends $k$ to Bob.
	\item[5.] Bob gives the inputs $v^b_1,  \ldots, v^b_{t-1}, r, v^b_{t+1}, \ldots, v^b_N$ in sequence
	to his device and records outputs $h^b_1, \ldots, h^b_N$.
	\item[6.] Bob sets $m'$ to be equal to the $(v_t^a)$th bit of $h^b_t$.
\end{enumerate}
\end{framed}
\caption{The recovery protocol ($REC$)}
\label{recprot}
\end{figure}

Next we wish to show that there is a protocol which makes $m$ unrecoverable for Bob (even while it allows
Bob to know the key $k$ after the protocol is completed, and allows him to have access to all remaining
quantum information in the device $D^b$).  We use the protocol $DEL$ in Figure~\ref{delprot}, which is also meant to follow
the protocol $PREP$ in Figure~\ref{certprot}.  The protocol has Bob play his side of the Magic Square
game and then has Alice check the resulting score.  Then at the conclusion of the protocol, Alice reveals
the key $k$ to Bob (which is merely a convenience for stating the security of the protocol).

\begin{figure}[H]
\begin{framed}
\begin{enumerate}
	\item[4.] For $i = 1, 2, \ldots, N$, Alice sends Bob the input $v^b_i$ and Bob sends back
	an output $h^b_i$. 
	\item[5.] Alice computes the average score at the Magic Square game (across $N$ rounds) achieved by the input sequences $\pmb{v}^a,
	\pmb{v}^b$ and output sequences $\pmb{h}^a, \pmb{h}^b$.  If this average is greater than or equal to $1 - \epsilon$, she accepts Bob's responses; otherwise, she aborts the protocol.
\item[6.] Alice sends $k$ to Bob.
\end{enumerate}
\end{framed}
\caption{The deletion protocol ($DEL$)}
\label{delprot}
\end{figure}

Note that at step 4 in Figure~\ref{delprot}, the interactions must be done in sequence (i.e., Alice waits
to receive $h^b_{i}$ before revealing $v^b_{i+1}$).  Bob can use his device $D^b$ to obtain
his outputs, but we do not require that.

The following theorem asserts the security of the deletion protocol $DEL$.  Let $SUCC$ denote the event that Alice ``accepts''  at step 5 in Figure~\ref{delprot}.
\begin{theorem}
\label{thm:cert}
	\label{thm:cdp}
	Assume that $P ( SUCC ) > 0$ in protocol $DEL$.  Then, the probability that Bob can guess $m$ at the conclusion of the protocol, conditioned on $SUCC$,
	is upper bounded by
	\begin{eqnarray}
         \label{theuppbound}
         \frac{1}{2} + 9 \sqrt{\epsilon + N^{-1/4}}  + \frac{e^{-\sqrt{N }/2}}{P ( SUCC ) }.
	\end{eqnarray}
\end{theorem}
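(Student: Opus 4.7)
The plan is to reduce the $N$-round analysis to a single-round application of Corollary~\ref{cor:magicguess}, via a ``spot-check plus martingale concentration'' argument. For each round $i$, define $\Gamma_i$ to be the (random, history-dependent) single Magic Square strategy executed at round $i$: the shared state is the joint state $\sigma_i$ of $(D^a,D^b)$ immediately before round $i$, and the measurement families are those the devices apply on the three possible inputs. Although Alice's $PREP$ round $i$ and Bob's $DEL$ round $i$ are temporally separated in the actual protocol, the no-communication assumption on $D^a$ together with the fact that Alice's operations act only on $D^a$ and Bob's only on $D^b$ lets one commute all intervening operations without altering any observed distribution; thus the two halves of round $i$ may be regarded as a single Magic Square execution. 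Let $s_i$ be the expected score of $\Gamma_i$ (averaged over the nine input pairs) and $W_i\in\{0,1\}$ the win indicator, so that $E[W_i \mid H_{i-1}] = s_i$, with $H_{i-1}$ the transcript through round $i-1$.

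Applying Corollary~\ref{cor:magicguess} to $\Gamma_t$ with $(a,b,b') = (v_t^a, v_t^b, r)$, and noting that $r \neq v_t^b$ by construction, one obtains that Bob's probability of correctly recovering $m = (h_t^a)_r$ from his residual quantum state, the classical transcript, and the key $k$ is at most $\tfrac12 + 9\sqrt{1 - s_t}$ as a function of the history and $t$. To convert this round-wise bound into a statement about the overall guessing event $G$, I connect the unobservable $s_i$'s to the observable empirical score via martingale concentration. Since $\{W_i - s_i\}$ is a bounded martingale difference sequence, Azuma--Hoeffding yields
\[
  P\!\left(\Bigl|\tfrac1N \sum_{i=1}^N (W_i - s_i)\Bigr| > N^{-1/4}\right) \;\leq\; 2 e^{-\sqrt{N}/2}.
\]
Call the complementary event $E_{conc}$. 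On $SUCC \cap E_{conc}$, the inequality $\tfrac1N\sum_i W_i \geq 1-\epsilon$ forces $\tfrac1N\sum_i s_i \geq 1 - \epsilon - N^{-1/4}$; since $t$ is uniform on $\{1,\dots,N\}$ and independent of the transcript, this yields $E[\,1 - s_t \mid SUCC \cap E_{conc}\,] \leq \epsilon + N^{-1/4}$. Averaging the round-$t$ guess bound over transcripts and applying Jensen to $\sqrt{\cdot}$ gives $P(G\mid SUCC \cap E_{conc}) \leq \tfrac12 + 9\sqrt{\epsilon + N^{-1/4}}$, and the decomposition $P(G\mid SUCC) \leq P(G \mid SUCC \cap E_{conc}) + P(\neg E_{conc})/P(SUCC)$ recovers~(\ref{theuppbound}) up to a harmless absolute constant in the exponential.

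The main obstacle is the first step: justifying that $\Gamma_t$, extracted from the interleaved multi-round protocol, genuinely forms a single-round Magic Square strategy to which Corollary~\ref{cor:magicguess} applies. The key enablers are the no-communication assumption on $D^a$ and the tensor-product structure of $D^a \otimes D^b$, which together permit the required reordering of operations; once $\Gamma_t$ has been made precise, the remaining Azuma--Hoeffding and Jensen steps are routine.
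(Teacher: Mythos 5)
Your proposal matches the paper's proof in all essentials: you apply Corollary~\ref{cor:magicguess} round by round, relate the (unobservable) conditional expected score at each round to the observed empirical score by an Azuma martingale bound, average over the uniformly random $t$ and invoke concavity of $\sqrt{\cdot}$, then decompose over the concentration event. The paper packages the concentration step as its Lemma~\ref{cor:win}, using $I'_i = E(I_i \mid I_{i-1}\cdots I_1)$ where you use $s_i = E[W_i \mid H_{i-1}]$; both work (yours is conditioned on a finer filtration, which is arguably the cleaner choice for the application of Corollary~\ref{cor:magicguess}). Two small remarks: (i) you explicitly flag the need to argue that Alice's round-$t$ measurement and Bob's later round-$t$ response together constitute a single Magic Square execution — this reduction, resting on the tensor structure and the no-communication assumption, is indeed the key enabler and the paper glosses over it; (ii) you invoke two-sided Azuma with the factor of $2$, but only the one-sided tail (empirical score exceeding the sum of conditional expectations) is needed, and dropping the unused tail recovers the paper's exact constant $e^{-\sqrt{N}/2}/P(SUCC)$, so the discrepancy you note is indeed benign and is in the prefactor, not the exponent.
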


Note that if we fix a constant $\gamma > 0$, assume that $P ( SUCC ) > \gamma$, and let $N$ tend to infinity,
then the upper bound (\ref{theuppbound}) tends to $\frac{1}{2} + 9 \sqrt{\epsilon}$.

For the proof of Theorem~\ref{thm:cert}, we will need the following lemma.  

\begin{lemma}
	\label{cor:win}
Let $I_i$ denote indicator variable for the event that the $i$th round is won.  Let
\begin{eqnarray}
I'_i & = & E ( I_i \mid I_{i-1} I_{i-2} \cdots I_1 ),
\end{eqnarray}
and let $\overline{I}' = ( \sum_i I'_i )/N$.  Then for any $\mu > 0$,
\begin{eqnarray}
\label{critdiff}
Pr ( SUCC \wedge (\overline{I}' < 1 - \epsilon - \mu )) & \leq & e^{- \frac{N \mu^2}{2}}.
\end{eqnarray}
\end{lemma}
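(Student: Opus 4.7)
The plan is to recognize this as a Doob martingale concentration statement and apply Azuma--Hoeffding. The random variables $I_i$ are Bernoulli (indicators of winning round $i$), and $I'_i$ is by definition the conditional expectation $\mathbb{E}[I_i \mid I_1, \ldots, I_{i-1}]$, so the differences $D_i := I_i - I'_i$ form a martingale difference sequence with respect to the filtration $\mathcal{F}_i = \sigma(I_1, \ldots, I_i)$. Because $I_i \in \{0,1\}$ and $I'_i \in [0,1]$, we have the pointwise bound $|D_i| \leq 1$.

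First I would observe that the event inside the probability on the left side of (\ref{critdiff}) is contained in the event that $M_N := \sum_{i=1}^N D_i$ is large. Indeed, $SUCC$ means that Alice accepts, which requires the empirical average $\overline{I} := \tfrac{1}{N}\sum_i I_i$ to be at least $1 - \epsilon$. Combined with $\overline{I}' < 1 - \epsilon - \mu$, this forces
\begin{equation*}
M_N \;=\; \sum_{i=1}^N (I_i - I'_i) \;=\; N(\overline{I} - \overline{I}') \;>\; N\mu.
\end{equation*}
Hence the left side of (\ref{critdiff}) is at most $\Pr(M_N > N\mu)$, and it suffices to bound this probability.

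Next I would apply the Azuma--Hoeffding inequality to $\{M_n\}_{n \leq N}$ with increment bound $|D_i| \leq 1$. The standard statement gives
\begin{equation*}
\Pr\bigl(M_N \geq t\bigr) \;\leq\; \exp\!\left(-\frac{t^2}{2\sum_{i=1}^N 1^2}\right) \;=\; \exp\!\left(-\frac{t^2}{2N}\right).
\end{equation*}
Setting $t = N\mu$ yields exactly $\exp(-N\mu^2/2)$, which is the claimed bound.

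There is no real obstacle here beyond correctly identifying the martingale structure: $I'_i$ is defined precisely so that $D_i$ has mean zero given the past, and the bound $|D_i|\leq 1$ is trivial from $I_i, I'_i \in [0,1]$. The one subtlety worth flagging is that $SUCC$ is $\mathcal{F}_N$-measurable (Alice's acceptance depends only on the round outcomes together with her publicly chosen inputs), so the intersection with $SUCC$ does not disturb the application of the Azuma bound --- we simply upper bound the indicator of $SUCC \wedge \{\overline{I}' < 1-\epsilon-\mu\}$ by the indicator of $\{M_N > N\mu\}$ pointwise and then take expectations.
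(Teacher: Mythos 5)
Your proof is correct and follows essentially the same route as the paper: both define the martingale $Z_n = \sum_{j\le n}(I_j - I'_j)$, note that $SUCC$ together with $\overline{I}' < 1-\epsilon-\mu$ forces $Z_N > N\mu$, and then apply Azuma--Hoeffding with increment bound $1$. Your write-up is somewhat more explicit in spelling out why the event implies $Z_N > N\mu$ (the paper's statement of this step contains a small typo), but the underlying argument is identical.
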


\begin{proof}
Let $\overline{I} = ( \sum_i I_i)/N$.
Let
	\begin{align}
		Z_i = \sum_{j=1}^i(I_j - I_j').
	\end{align}
	Then $\{Z_0, Z_1,\dots,Z_N\}$ is a martingale:
	\begin{align*}
		E(Z_{i+1}|Z_i,\dots,Z_1) = Z_i + E(I_{i+1}|I_{i} \cdots I_1) - I_{i+1}' = Z_i.
	\end{align*}
Therefore by Azuma's inequality, the probability of the event $\sum_i (Z_i ) > \mu$ is upper bounded by $e^{- \frac{N \mu^2}{2}}$.
The event in inequality (\ref{critdiff}) implies $\sum_i (Z_i ) > \mu$, and the desired result follows.
\end{proof}

Now we can prove the main theorem of this section.
\begin{proof}[Proof of Theorem \ref{thm:cdp}]
By Corollary~\ref{cor:magicguess}, for any $i$ and any $c \in \{ 0, 1, 2 \} \smallsetminus v^b_i$, the probability that 
Bob can guess the $c$th bit of $h^a_i$ is upper bounded by $\frac{1}{2} + 9 \sqrt{ 1 - I'_i }$.  Therefore, the probability
that Bob can guess $m$ at the conclusion of the protocol $DEL$ is no more than
\begin{eqnarray}
\left[ \sum_{i=1}^N  \left( \frac{1}{2} + 9 \sqrt{ 1 - I'_i }  \right) \right] / N,
\end{eqnarray}
which by the concavity of the square root function is upper bounded by
\begin{eqnarray}
\frac{1}{2} + 9 \sqrt{ 1 - \overline{I}'},
\end{eqnarray}
For any $\mu > 0$, we have by Lemma~\ref{cor:win},
\begin{eqnarray*}
Pr [ \overline{I}' \geq 1 - \epsilon - \mu \mid SUCC ] \geq 1 - \frac{ e^{-N \mu^2/2}}{Pr ( SUCC)},
\end{eqnarray*}
and therefore, conditioned on $SUCC$, Bob's probability of guessing $m$ is upper bounded by
\begin{eqnarray}
\frac{1}{2} + 9 \sqrt{ \epsilon + \mu}  + \frac{e^{-N \mu^2/2}}{Pr ( SUCC)}.
\end{eqnarray}
Setting $\mu = N^{-1/4}$ yields the desired result.
\end{proof}
\vskip0.2in
\noindent\textbf{Acknowledgements.} This work includes contributions from the National Institute of Standards and Technology and is not subject to U.S. copyright. This research was supported in part by NSF grant 1526928.

\bibliographystyle{plain}
\bibliography{local}

\begin{thebibliography}{10}

\bibitem{Acin:2016}
Antonio Ac\'{\i}n, Daniel Cavalcanti, Elsa Passaro, Stefano Pironio, and Paul
  Skrzypczyk.
\newblock Necessary detection efficiencies for secure quantum key distribution
  and bound randomness.
\newblock {\em Phys. Rev. A}, 93:012319, Jan 2016.

\bibitem{Adlam:2015}
Emily Adlam and Adrian Kent.
\newblock Device-independent relativistic quantum bit commitment.
\newblock {\em Physical Review A}, 92(022315), 2015.

\bibitem{Aharon:2016}
N~Aharon, S~Massar, S~Pironio, and J~Silman.
\newblock Device-independent bit commitment based on the chsh inequality.
\newblock {\em New Journal of Physics}, 18(2):025014, 2016.

\bibitem{Aharon:2011}
Nati Aharon, Andre Chailloux, Iordanis Kerenidis, Serge Massar, and Stefano
  Pironio.
\newblock Weak coin flipping in a device-independent setting.
\newblock In Dave Bacon, Martin Roetteler, and Miguel Martin-Delgado, editors,
  {\em Proceedings of the 6th Conference on Theory of Quantum Computation,
  Communication, and Cryptography (TQC)}, number 6745 in Lecture Notes in
  Computer Science, pages 1--12, 2011.

\bibitem{Arnon:2016}
Rotem Arnon-Friedman, Renato Renner, and Thomas Vidick.
\newblock Simple and tight device-independent security proofs.
\newblock arXiv:1607.01797, 2016.

\bibitem{budroni2013bounding}
Costantino Budroni, Tobias Moroder, Matthias Kleinmann, and Otfried G{\"u}hne.
\newblock Bounding temporal quantum correlations.
\newblock {\em Physical Review Letters}, 111(2):020403, 2013.

\bibitem{Dupuis:2016}
Frederic Dupuis, Omar Fawzi, and Renato Renner.
\newblock Entropy accumulation.
\newblock arXiv:1607.01796, 2016.

\bibitem{Ekert91}
Artur~K. Ekert.
\newblock Quantum cryptography based on bell's theorem.
\newblock {\em Phys. Rev. Lett.}, 67:661--663, Aug 1991.

\bibitem{Jain:2017}
Rahul Jain, Carl~A. Miller, and Yaoyun Shi.
\newblock Parallel device-independent quantum key distribution.
\newblock arXiv:1703.05426, 2017.

\bibitem{Kaniewski:2016}
Jedrzej Kaniewski and Stephanie Wehner.
\newblock Device-independent two-party cryptography secure against sequential
  attacks.
\newblock {\em New Journal of Physics}, 18, May 2016.

\bibitem{Mayers98}
Dominic Mayers and Andrew Yao.
\newblock Quantum cryptography with imperfect apparatus.
\newblock In {\em Foundations of Computer Science, 1998. Proceedings. 39th
  Annual Symposium on}, pages 503--509. IEEE, 1998.

\bibitem{Miller:2016}
Carl~A. Miller and Yaoyun Shi.
\newblock Robust protocols for securely expanding randomness and distributing
  keys using untrusted quantum devices.
\newblock {\em J. ACM}, 63(4):33:1--33:63, October 2016.

\bibitem{Miller2017}
Carl~A. Miller and Yaoyun Shi.
\newblock Randomness in nonlocal games between mistrustful players.
\newblock {\em Quantum Information \& Computation}, 17(7\&8):0595--0610, 2017.

\bibitem{Universal-spot}
Carl~A. Miller and Yaoyun Shi.
\newblock Universal security for randomness expansion from the spot-checking
  protocol.
\newblock {\em SIAM Journal on Computing}, 46(4):1304--1335, 2017.

\bibitem{Navascues:2008}
Miguel Navascues, Stefano Pironio, and Antonio Acin.
\newblock A convergent hierarchy of semidefinite programs characterizing the
  set of quantum correlations.
\newblock {\em New Journal of Physics}, 10, 2008.

\bibitem{pironio10}
S~Pironio, A~Ac{\'\i}n, S~Massar, AB~de~la Giroday, DN~Matsukevich, P~Maunz,
  S~Olmschenk, D~Hayes, L~Luo, TA~Manning, et~al.
\newblock Random numbers certified by bell's theorem.
\newblock {\em Nature}, 464(7291):1021, 2010.

\bibitem{pironio2010convergent}
Stefano Pironio, Miguel Navascu{\'e}s, and Antonio Acin.
\newblock Convergent relaxations of polynomial optimization problems with
  noncommuting variables.
\newblock {\em {SIAM} Journal on Optimization}, 20(5):2157--2180, 2010.

\bibitem{Ribeiro2016}
Jeremy Riberio, Glaucia Murta, and Stephanie Wehner.
\newblock Fully general device-independence for two-party cryptography and
  position verification.
\newblock arXiv:1609.08487, 2016.

\bibitem{Rib:2016}
Jeremy Riberio, Le~Phuc Thinh, Jedrzej Kaniewski, Jonas Helsen, and Stephanie
  Wehner.
\newblock Device-independence for two-party cryptography and position
  verification.
\newblock arXiv:1606.08750, June 2016.

\bibitem{Silman:2011}
J.~Silman, A.~Chailloux, N.~Aharon, I.~Kerenidis, S.~Pironio, and S.~Massar.
\newblock Fully distrustful quantum bit commitment and coin flipping.
\newblock {\em Phys. Rev. Lett.}, 106:220501, Jun 2011.

\bibitem{Slofstra:2016}
William Slofstra.
\newblock Tsirelson's problem and an embedding theorem for groups arising from
  non-local games.
\newblock arXiv:1606.03140, 2016.

\bibitem{Unruh:2015}
Dominique Unruh.
\newblock Revocable quantum timed-release encryption.
\newblock {\em J. ACM}, 62(6):49:1--49:76, December 2015.

\bibitem{Vazirani:dice}
Umesh Vazirani and Thomas Vidick.
\newblock Certifiable quantum dice: Or, true random number generation secure
  against quantum adversaries.
\newblock In {\em Proceedings of the Forty-fourth Annual ACM Symposium on
  Theory of Computing}, STOC '12, pages 61--76, New York, NY, USA, 2012. ACM.

\bibitem{Vazirani:QKD:PRL}
Umesh Vazirani and Thomas Vidick.
\newblock Fully device-independent quantum key distribution.
\newblock {\em Phys. Rev. Lett.}, 113:140501, Sep 2014.

\bibitem{WuBancal:2016_p}
Xingyao Wu, Jean-Daniel Bancal, Matthew McKague, and Valerio Scarani.
\newblock Device-independent parallel self-testing of two singlets.
\newblock {\em Physical Review A}, 93:062121, Jun 2016.

\end{thebibliography}

\appendix

\newpage
\section{Expressions of $P_1$ and $P_2$}
\label{app:exp}
The winning probability of the CHSH game is 
\begin{equation*}
\begin{aligned}
	P_1 = 1/4(P(00|00) + P(11|00) 
	+P(00|01) + P(11|01) \\
	+P(00|10) + P(11|10)
	+P(01|11) + P(10|11)),
\end{aligned}
\end{equation*}
where $Pr(xy|ab) = \bra{\psi} A_{ax}B_{by} \ket{\psi}$. Since for any input $a$ and $b$, $A_{a1} = \bbI - A_{a0}$ and $B_{b1} = \bbI - B_{b0}$, we can express $P_1$ in terms of the projectors as 
\begin{equation}
	\begin{aligned}
	P_1 = \bra{\psi} \big( \frac{3}{4} - \frac{1}{2} A_{00} -\frac{1}{2}B_{00}  + \frac{1}{2} A_{00}B_{00} \\
	+ \frac{1}{2} A_{00}B_{10} + \frac{1}{2} A_{10}B_{00} - \frac{1}{2} A_{10}B_{10}\big)\ket{\psi}.
	\end{aligned}
\end{equation}

When Bob wants to guess Alice's output $x$ given $a$ and $b$, the probability that he can guess correctly is
\begin{equation}
	\begin{aligned}
	P_2 =  1/4 \sum_{b,y} \big(Pr(0y0|0b)+ Pr(1y1|0b)\\
	+Pr(0y0|1b) + Pr(1y1|1b)\big)
	\end{aligned}	
\end{equation}
where 
\begin{equation*}
\begin{aligned}
 Pr(xyx'|ab) &=  \bra{\psi}A_{ax}^\dagger B_{by}^\dagger B_{abx'}^{'\dagger} B'_{abx'}B_{by}A_{ax}\ket{\psi}\\
 		&= \bra{\psi} A_{ax}^\dagger B_{by}^\dagger B'_{abx'}B_{by} \ket{\psi}.
\end{aligned}
\end{equation*}
The measurement $\{ \{ B'_{abx'} \}_{x'} \}_{ab}$ is a set of measurements indexed by $(a, b) \in \mathcal{A} \times \mathcal{B}$.\footnote{Note that it not necessary to make Bob's second measurement depend on the outcome of his first measurement, since
that outcome ($y$) is recoverable from the postmeasurement state of his first measurement.}
The two measurements $\{ \{ B_{by} \}_y \}_b$ and $\{ \{ B'_{abx'} \}_{x'} \}_{ab}$ commute with $\{ \{ A_{ax} \}_x \}_a$.

The probability $P_2$ can be expressed in terms of the projectors as $P_2 = \frac{1}{4} \bra{\psi} S \ket{\psi}$ with $S$ defined as 
\begin{equation}
\begin{aligned}
	\frac{1}{4} S = & \bbI - \frac{1}{2} (A_{00} + A_{10} ) - \frac{1}{4} (B'_{000} + B'_{010} + B'_{100} + B'_{110})\\
		+&\frac{1}{2} ( A_{00}B'_{000} + A_{00}B'_{010} + A_{10}B'_{100} + A_{10}B'_{110})\\
		+&\frac{1}{4} (B_{00}B'_{000} + B'_{000}B_{00} + B_{10}B'_{010} + B'_{010}B_{10} + B_{00}B'_{100} + B'_{100}B_{00} + B_{10}B'_{110} + B'_{110}B_{10})\\
		- &\frac{1}{2} (A_{00}B_{00}B'_{000} + A_{00}B'_{000}B_{00} + B_{00}B'_{000}B_{00})
		- \frac{1}{2} (A_{00}B_{10}B'_{010} + A_{00}B'_{010}B_{10} + B_{10}B'_{010}B_{10})\\
		- &\frac{1}{2} (A_{10}B_{00}B'_{100} + A_{10}B'_{100}B_{00} + B_{00}B'_{100}B_{00})
		- \frac{1}{2} (A_{10}B_{10}B'_{110} + A_{10}B'_{110}B_{10} + B_{10}B'_{110}B_{10})\\
		+& A_{00}B_{00}B'_{000}B_{00} + A_{00}B_{10}B'_{010}B_{10} + A_{10}B_{00}B'_{100}B_{00} + A_{10}B_{10}B'_{110}B_{10}. 
\end{aligned}
\end{equation}
Here we use the relation $B'_{ab1} = \bbI - B'_{ab0}$ again. 
\section{Proof of Inequalities (\ref{rigineq1})--(\ref{rigineq2})}
\label{app:rig}
We follow steps from the proof of rigidity for the Magic Square game in \cite{WuBancal:2016_p}.  (See also \cite{Jain:2017},
which performs a similar derivation based on \cite{WuBancal:2016_p}.)
By symmetry, it suffices to address the single case where $a = b = 0, a' = b' = 1$, so we will assume
those values from now on.  Denote the probability that Alice and Bob lose the Magic Square game on
inputs $(i,j)$ by $\delta_{ij}$.  The average of these quantities over all $i, j \in \{ 0, 1, 2 \}$ is equal to $\delta$.  By linearity, we can compute the quantities $\delta_{ij}$ from the reflection strategy via the following expression:
\begin{eqnarray}
\left< \psi \right| F_{ij} \otimes G_{ij} \left| \psi \right> & = & 1 - 2 \delta_{ij}.
\end{eqnarray}
Therefore,
\begin{eqnarray*}
\left\|  F_{ij} \otimes G_{ij} \left| \psi \right>  - \left| \psi \right> \right\| & = & \sqrt{2 - 2 \left< \psi \right| F_{ij} \otimes G_{ij} \left| \psi \right>} \\
& = & 2 \sqrt{\delta_{ij}}, \\
\end{eqnarray*}
which proves (\ref{rigineq1}), since $2 \sqrt{\delta_{ij}} \leq 2 \sqrt{ 9 \delta } = 6 \sqrt {\delta}$.  

Let $\epsilon_{ij} = 2 \sqrt{\delta_{ij}}$.  We then
have the following, in which we let the expression $u =_x v$ denote that the Euclidean distance between
the vectors $u$ and $v$ is no more than $x$.  
\begin{eqnarray*}
F_{00} F_{11} \otimes I \left| \psi \right> & =_{\epsilon_{11}} & 
F_{00} \otimes G_{11} \left| \psi \right> \\
& = & 
- F_{02} F_{01} \otimes G_{21} G_{01} \left| \psi \right> \\
&=_{\epsilon_{01}}& - F_{02} \otimes G_{21} \left| \psi \right> \\
&=_{\epsilon_{02}}&  I \otimes G_{21} G_{02} \left| \psi \right> \\
&=& I \otimes G_{21} G_{22} G_{12} \left| \psi \right> \\
&=_{ \epsilon_{12}} & F_{12} \otimes G_{21} G_{22} \left| \psi \right> \\
&=_{\epsilon_{22}} & F_{12} F_{22} \otimes G_{21} \left| \psi \right> \\
&=_{\epsilon_{21} } & F_{12} F_{22} F_{21} \otimes I \left| \psi \right> \\
&=& F_{12} F_{20} \otimes I \left| \psi \right> \\
&=_{\epsilon_{20}} & F_{12}  \otimes G_{20} \left| \psi \right> \\
&= & - F_{11} F_{10}  \otimes G_{00} G_{10} \left| \psi \right> \\
&=_{\epsilon_{10}} & - F_{11} \otimes G_{00} \left| \psi \right> \\
&=_{\epsilon_{00}} & - F_{11} F_{00} \otimes I \left| \psi \right>
\end{eqnarray*}
Therefore, using the concavity of the square root function,
\begin{eqnarray*}
\left\| F_{00} \otimes G_{11} \left| \psi \right> + F_{11} \otimes G_{00} \left| \psi \right> \right\|
& \leq & \sum_{ij} \epsilon_{ij} \\
& = & 2 \sum_{ij} \sqrt{\delta_{ij}} \\
&= & 2 \cdot 9 \cdot \sum_{ij} \sqrt{\delta_{ij}}/9  \\
& \leq & 2 \cdot 9 \cdot \sqrt{ \sum_{ij} \delta_{ij}/9 } \\
& = & 6 \sqrt{ \delta },
\end{eqnarray*}
which implies (\ref{rigineq2}) as desired.
\end{document}